\newcommand{\R}{\mathbb{R}}
\newcommand{\C}{\mathbb{C}}
\newcommand{\Q}{\mathbb{Q}}
\newcommand{\Z}{\mathbb{Z}}
\newcommand{\F}{\mathbb{F}}	
\newcommand{\CF}{C^{(4)} }
\newcommand{\CL}{C^{(l)} }
\newcommand{\DL}{D^{(l)} }
\newcommand{\CT}{C^{(2)} }
\newcommand{\DT}{D^{(2)} }
\newcommand{\diag}{\text{diag}}
\newcommand{\Th}{\vartheta}
\newtheorem{proposition}{Proposition}
\newtheorem{lemma}[proposition]{Lemma}
\newtheorem{corollary}[proposition]{Corollary}
\newtheorem{theorem}{Theorem}
\theoremstyle{remark}
\newtheorem{remark}{Remark}
\begin{document}

\title{Counterexample to the Generalized Belfiore-Sol\'{e} Secrecy Function Conjecture for $l$-modular lattices} 

\author {Anne-Maria Ernvall-Hyt\"onen and B. A. Sethuraman

\thanks{Anne-Maria Ernvall-Hyt\"onen is with Department of Mathematics and Statistics,  00014 University of Helsinki, Finland. Email: anne-maria.ernvall-hytonen@helsinki.fi}
\thanks{B.A. Sethuraman is with Department of Mathematics, California State University Northridge,  Northridge, CA 91330, USA. Email: al.sethuraman@csun.edu}
\thanks{The authors wish to thank Daniel Katz for some illuminating discussions at the start of this project and for reading portions of the paper, Jean-Claude Belfiore for some discussions on the results in this paper and on further directions, and Eric Rains for some discussions on the proof of Theorem \ref{2-mod_ratnl_eq}. Ernvall-Hyt\"onen wishes to thank the mathematics department of California State University Northridge for her visit there during which this research was initiated. The research of Ernvall-Hyt\"onen was funded by the Academy of Finland grants 138337 and 138522, while that of B.A. Sethuraman was supported by U.S. National Science Foundation grant CCF-1318260.}

}

\maketitle

\begin{abstract} We show that the secrecy function conjecture that states that the maximum of the secrecy function of an $l$-modular lattice occurs at $1/\sqrt{l}$ is false, by proving that the $4$-modular lattice $\CF = \Z \oplus \sqrt{2}\Z \oplus 2\Z$ fails to satisfy this conjecture.  We also indicate how the secrecy function must be modified in the $l$-modular case to have a more reasonable chance for it to have a maximum at $1/\sqrt{l}$, and show that the conjecture, modified with this new secrecy function, is true for various $2$-modular lattices.
\end{abstract}

\begin{keywords} 
Wiretap Coding, Secrecy Function, $l$-Modular Lattice.
\end{keywords}

\section {Introduction} \label{secn_intro}

Recall \cite{Queb} that an integral lattice $\Lambda \subset \R^n$ is said to be \textit{$l$-modular} if there exists a similarity of $\R^n$ of norm $l$, that is, an orthogonal transformation $S$ followed by a scaling of lengths by $\sqrt{l}$, such that $\sqrt{l}S(\Lambda^*) = \Lambda$. Here, $\Lambda^*$ is the dual of $\Lambda$, and $\Lambda\subset \Lambda^*$ because of integrality. It follows from elementary considerations that $l$ must necessarily be an integer and that $\Lambda$ must have determinant $l^{n/2}$. Since the determinant of $\Lambda$ is an integer, we find immediately that $n$ must be even, unless $l$ is itself a square. When $l=1$, of course, an $l$-modular lattice is known as a \textit{unimodular} lattice.

The secrecy function was introduced in \cite{OggBelSecFun} by Oggier and Belfiore, who considered the problem of wiretap code design for the Gaussian channel, using lattice-based coset coding. The function was further refined by Belfiore and Sol\'{e}  in \cite[Definition 3]{BS} to take into account the volume of the lattice $\Lambda$. It  is defined for an $l$-modular lattice $\Lambda$ (actually for any lattice) in dimension $n$  by 
 
\begin{equation} \label{old_sf}
\Xi(y) = \frac {\Theta_{\lambda \Z^n}(y)}{\Theta_{\Lambda}(y)}:= \frac {\Theta_{\lambda \Z^n}(\imath y)}{\Theta_{\Lambda}(\imath y)}.
\end{equation}
Here,  $y$ is a positive real variable, $\lambda = l^{n/4}$ is the volume of the $l$-modular lattice $\Lambda$, $\lambda \Z^n$ denotes the cubic lattice $\Z^n$ scaled to have the volume $\lambda$ (thus, each dimension of $\lambda \Z^n$ is scaled by $l^{1/4}$), and for any $\tau \in \C$ with $im(\tau) > 0$ and any lattice $L$, $\Theta_{L}(\tau)$  denotes the \textit{theta series} of $L$, that is, the series $\sum_{j=0}^{\infty} a_j e^{\imath \pi  j\tau}$, were $a_j$ is the number of vectors in $L$ of norm  (squared length)  $j$. As indicated in the equation above, when working exclusively with purely imaginary values $\imath y$ of $\tau$, we will simply write $\Theta_{L}(y)$ for $\Theta_{L}(\imath y)$.

The secrecy function was studied in detail in \cite{BS} by Belfiore and Sol\'{e}. Assuming that the noise variance $\sigma_e^2$ on  Eve's channel is much higher than the corresponding variance $\sigma_e^2$ on Bob's channel, they analyze the probability of both users making a correct decision, and determine conditions under which Eve's probability of correct decoding is minimized.  If $\Lambda_e \subset \Lambda_b$ are the lattices used in the coset-coding paradigm, they express these conditions  in terms of the theta series of $\Lambda_e$. For  a given choice of lattice $\Lambda_e$, it follows from these considerations that the value of $y$ at which the secrecy function $\Xi_{\Lambda_e}(y)$ of $\Lambda_e$  obtains its maximum yields the value of the signal-to-noise ratio in Eve's channel that causes maximum confusion to Eve, as compared to using the standard lattice $\Z^n$.
(The maximal achievable value of the secrecy function is called the \textit{secrecy gain} of the lattice $\Lambda_e$.)

Belfiore and Sol\'{e} studied the secrecy function for various lattices and conjectured in \cite{BS} that for a unimodular lattice ($l=1$), the secrecy function assumes its (global) maximum at $y=1$.  This has since been verified for a large number of lattices (see e.g.,  \cite{ErnHyt}, \cite{LinOgg}, \cite{Julie}, \cite{JulieSeth}), and it was proven in \cite{Julie} that  infinitely many unimodular lattices satisfy the conjecture, but the full conjecture is still open. In \cite{OggSoleBel}, Oggier, Sol\'{e} and Belfiore further extended this conjecture to $l$-modular lattices ($l > 1$):  they conjectured that the secrecy function of $l$-modular lattices attains its (global) maximum  at $y= 1/\sqrt{l}$ (\cite[Proposition 2, and Conjecture 1]{OggSoleBel}.  

We show in this paper that this extended conjecture is false in general.  We show that the $4$-modular lattice $\CF = \Z \oplus \sqrt{2}\Z \oplus 2\Z$  fails to satisfy the conjecture. We show that in fact that the secrecy function of $\CF$ has a global \textit{minimum} at $y = 1/\sqrt{4}$, and thus behaves contrary to what is expected by the conjecture. 

We also indicate how the conjecture must be modified to have a reasonable chance of being true:
 the numerator in the secrecy function should be replaced by a suitable power of the theta series of $\DL$, where $\DL = \Z \oplus \sqrt{l}\Z$.  We show that the modified secrecy function conjecture holds for various $2$-modular lattices, and in fact, provide a necessary and sufficient criterion for a $2$-modular lattice to satisfy  the modified conjecture.

\section {The Lattice $\CF$.} \label{CF_Secn}
In this section we show that for the $4$-modular lattice $\CF = \Z \oplus \sqrt{2}\Z \oplus 2\Z$, the secrecy function of $\CF$ defined in Equation \ref{old_sf} attains a minimum at $y = 1/2$, showing that the secrecy function conjecture is false in general.  First note that $\CF$ is indeed $4$-modular: it is easy to see that its dual is the lattice $\Z \oplus (1/\sqrt{2})\Z \oplus (1/2)\Z$, and the map $\R^3 \rightarrow \R^3$ that sends $(x,y,z)$ to $(2z, 2y, 2x)$ indeed provides an isomorphism between $\Z \oplus (1/\sqrt{2})\Z \oplus (1/2)\Z$ and $\CF$, and this map is indeed a similarity that  multiplies lengths by $2$ (and norms by $4$).

Recall the Jacobi theta functions $\Th_3(q)$, $\Th_2(q)$ and $\Th_4(q)$, where $q = e^{\imath \pi \tau}$, $im(\tau) > 0$. We will interchangeably use the notation  $\Th_3(\tau)$, $\Th_2(\tau)$ and $\Th_4(\tau)$ when thinking of these as functions of $\tau$ instead of $q$, the usage will be clear from the context. These are given by 
\begin{eqnarray} \label{prod_reps}
\vartheta_2(q) =\vartheta_2(\tau) & =& \sum_{n=-\infty}^{\infty}q^{(n+1/2)^2}=2q^{1/4}\prod_{n=1}^{\infty}(1-q^{2n})(1+q^{2n})^2\\ \nonumber
\vartheta_3(q)=\vartheta_3(\tau) &=& \sum_{n=-\infty}^{\infty}q^{n^2}=\prod_{n=1}^{\infty}(1-q^{2n})(1+q^{2n-1})^2\\ \nonumber
\vartheta_4(q) =\vartheta_4(\tau) & =& \sum_{n=-\infty}^{\infty}(-1)^n q^{n^2} =\prod_{n=1}^{\infty}(1-q^{2n})(1-q^{2n-1})^2.
\end{eqnarray}

These functions satisfy, for instance, the following formulas (\cite[page 104]{conwaysloane}):
\begin{eqnarray}\label{theta1}
\vartheta_3^4(\tau)&=\vartheta_2^4(\tau)+\vartheta_4^4(\tau)\\ \nonumber
2 \vartheta_3^2(2{\tau})&=\vartheta_3^2(\tau)+\vartheta_4^2(\tau)\\ \nonumber
2 \vartheta_2^2(2{\tau})&=\vartheta_3^2(\tau)-\vartheta_4^2(\tau).
\end{eqnarray}
(Notice that the last two equations yield $\vartheta_3^2(\tau) = \vartheta_3^2(2{\tau}) + \vartheta_2^2(2{\tau})$.)

In this paper we will be concerned with purely imaginary values of $\tau$: $\tau = \imath y$ where $y>0$.  As with theta series of lattices, we will simply write $\Th_3(y)$, $\Th_2(y)$ and $\Th_4(y)$ for $\Th_3(\imath y)$, $\Th_2(\imath y)$ and $\Th_4(\imath y)$. The Jacobi theta functions $\vartheta_2$, $\vartheta_3$ and $\vartheta_4$ are useful in representing the theta functions of various lattices. 
A thorough introduction to the theory of these functions can be found in \cite[Chap. 10]{steinandshakarchi}, in terms of the ``master'' theta function $\Theta(z|\tau) = \sum_{n=-\infty}^{\infty}e^{2\pi\imath n z + \pi\imath n^2 \tau}$. (We may write our functions $\Th_2$, $\Th_3$, $\Th_4$ in terms of $\Theta$ as $\Th_2(\tau) = e^{\imath\pi\tau/4} \Theta(\dfrac{\tau}{2}|\tau)$, $\Th_3(\tau) = \Theta(0|\tau)$, and $\Th_4(\tau) = \Theta(\dfrac{1}{2}|\tau)$--see \cite[page 102]{conwaysloane} for instance, but note the slight difference in the definitions of $\Theta$ in \cite{steinandshakarchi} and \cite{conwaysloane}.)


Note that the theta series of $\CF$ (for $\tau = \imath y$, $y>0$) is given by $\Th_3(y)\Th_3(2y)\Th_3(4y)$, and the theta series of {$(\sqrt{2}\Z)^3$ is given by $\Th_3(2y)^3$}.  We 
find it convenient to work with the reciprocal of the secrecy function:
\begin{equation} \label{CF_SF_eqn}
1/\Xi_{\CF}(y) = \frac{\Th_3(y)\Th_3(2y)\Th_3(4y)}{\Th_3^3(2y)} = \frac{\Th_3(y)\Th_3(4y)}{\Th_3^2(2y)}.
\end{equation}
We find it convenient as well to put $z=2y$. Thus, to show that
the secrecy function of $\CF$ defined in Equation \ref{old_sf} attains a minimum at $y = 1/2$, we need to show that the modified function
\begin{equation} \label{C4_f_defn}
f(y) = \frac{\Th_3(y/2)\Th_3(2y)}{\Th_3^2(y)}
\end{equation}
(where by abuse of notation we have retained the symbol $y$ for the new variable $z$) has a  maximum at $y=1$.

We now invoke  results connecting theta functions at the purely imaginary values $\tau = \imath y$ ($y>0$)  and $\tau/2$ (i.e., at $q = e^{-\pi y}$ and $\sqrt{q}$) from \cite{B2}; a summary of what we need is in \cite[Section 4.6, Page 137]{B2}. We build on the notation ``$k$'' and ``$l$'' of \cite{B2} and write more specifically $k(q)$, $k'(q)$, $l(q)$, and $l'(q)$ for the objects:
\begin{eqnarray} \label{kl_def}
k(q) &=& \frac{\Th_2^2(q)}{\Th_3^2(q)}\\
k'(q) &=& \sqrt{1-k^2(q)} = \frac{\Th_4^2(q)}{\Th_3^2(q)} \nonumber \\
l(q) &=& k(\sqrt{q}) = \frac{\Th_2^2(\sqrt{q})}{\Th_3^2(\sqrt{q})} \nonumber\\
l'(q) &=& k'(\sqrt{q}) = \frac{\Th_4^2(\sqrt{q})}{\Th_3^2(\sqrt{q})} \nonumber
\end{eqnarray}

(The expression for $k'(q)$ arises from the first of Equations \ref{theta1} above.)
Finally, we write 
\begin{equation} 
M_2(q) = \frac{\Th_3^2(q)}{\Th_3^2(\sqrt{q})}.
\end{equation}

As described in \cite{B2}, $M_2(q)$  can be written in terms of $k(q)$, $k'(q)$, $l(q)$, and $l'(q)$, and further, $k(q)$ and $l(q)$ are connected by a ``modular equation.'' We have the relations (\cite[Section 4.6, Page 137]{B2} (these can also be directly derived from the properties of theta functions in Equations \ref{theta1})  
\begin{equation} \label{M2_kl}
M_2(q) = \frac{1}{1+k(q)} = \frac{1+l'(q)}{2},
\end{equation} and
\begin{eqnarray}  \label{mod_eq}
l(q) &=& \frac{ 2\sqrt{k(q)}} {1+k(q)}\\
k(q) &=& \frac{1-l'(q)}{1+l'(q)} \nonumber
\end{eqnarray}

Since $f(y) = \dfrac{M_2(q^2)}{M_2(q)}$,  Equations \ref{M2_kl} shows that
\begin{equation} \label{expressn_to_maximize}
f(y)  = \frac{1+k(q)}{1+k(q^2)} = \frac{(1+k(q))(1+l'(q^2))}{2} = \frac{(1+k(q))(1+k'(q))}{2}.
\end{equation}

Thus, we need to maximize $(1+k(q))(1+k'(q))$ where $k(q)^2 + k'(q)^2=1$. 
Putting $k(q)= \cos(\alpha) = \dfrac{1-t^2}{1+t^2}$ and $k'(q)=\sin(\alpha) = \dfrac{2t}{1+t^2}$, where $t = \tan(\alpha/2)$, we find need to determine the extrema of
\begin{equation}
f(t) = \frac{(1+t)^2}{(1+t^2)^2}.
\end{equation}
Now $0 <k(q) <1 $ and $0 < k'(q) <1$ by definition of $k(q)$, $k'(q)$ and the relation $k(q)^2 + k'(q)^2=1$. Thus, $0 < \alpha < \pi/2$, so $0 < \alpha/2 < \pi/4$.  It follows that $0 < t < 1$.  Calculus now shows that 
that $t=\sqrt{2}-1$ is the unique (and hence global) maximum of $f(t)$ in the region $0 < t < 1$.

Corresponding to $t=\sqrt{2}-1$, we find $\alpha/2 = \pi/8$, i.e., $\alpha=\pi/4$.  Thus, $q$ is such that $k(q) = k'(q)$, i.e., $\Th_2(y) = \Th_4(y)$. This occurs precisely at $y=1$ (see for instance \cite[Proof of Lemma 1]{JulieSeth}, or \cite[Exercise 4, Section 2.3]{B2} along with \cite[Exercise 8b, Section 3.1]{B2}).  Further, we see that $f(y)$  considered as a function of $y$ has the same increase/decrease behavior on either side of $y=1$ as $f(t)$ does on either side of $t=\sqrt{2}-1$ when considered as a function of $t$: The map $y \mapsto k(e^{-\pi y})$ is a monotonically decreasing map (\cite[Equation 2.3.9, Page 42] {B2}, this also follows from Lemma \ref{kprime_increasing} ahead, and the fact that $k^2 + k'^2 = 1$), while the map $k(e^{-\pi y}) = \cos(\alpha) \mapsto t = \tan(\alpha/2)$ is also monotonically decreasing. The chain rule now shows that $df/dy$ and $df/dt$ have the same sign. It follows that $f(y)$ increases for $0 < y < 1$ and decreases for $1<y<\infty$; correspondingly, since $1/\Xi_{\CF}(y) = f(2y)$, we find $\Xi_{\CF}$ \textit{decreases} for $0 < y < 1/2$ and \textit{increases} for $1/2 < y < \infty$.

Thus, $\CF$ violates the conjecture.

\begin{remark}\label{C4_graph} The graph of the secrecy function of $\CF$ may be computed (approximately), using Mathematica$^\circledR$. The graph is shown in Figure \ref{counter4}, and verifies our analysis above.
\begin{figure}[h]
    \centering
    \includegraphics[width=0.8\textwidth]{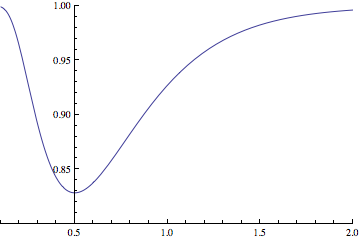}
    \caption{\small{Graph of secrecy function of lattice $\CF$. Notice that according to the original conjecture, the function should have its maximum at $x=\frac{1}{2}$, but it has a minimum.}}
    \label{counter4}
\end{figure}

\end{remark}

\section{Modified Secrecy Function} \label{secn_mod_sf}
The current definition of the secrecy function compares the theta series of an $l$-modular lattice in $\R^n$ to the theta series of the (scaled) unimodular lattice $\Z^n$. A more natural definition would be one that compared likes with likes: that compared the theta series of an $l$-modular lattice to that of another reference $l$-modular lattice, scaled suitably to match volumes.  

The simplest $l$-modular lattice is $\DL = \Z \oplus \sqrt{l}\Z$ (when $l=1$, we take $\DL = \Z$). Note that $\DL$ can be proved to be $l$-modular exactly like the lattice $\CF$ in Section \ref{CF_Secn}---the dual is the lattice $\Z \oplus (1/\sqrt{l})\Z$, and the required map on $\R^2$ is the one that takes $(x,y)$ to $(\sqrt{l}y, \sqrt{l}x)$. Accordingly, we write $n = k \dim(\DL)$ ($ = 2k$ for $l>1$), and for an  $l$-modular lattice $\Lambda$ in $\R^n$, we define the \textit{$l$-modular secrecy function} $\Xi_l(y)$ (or $\Xi_{l,\Lambda}(y)$ if the lattice $\Lambda$ needs to be emphasized), by
\begin{equation} \label{modular_sf}
\Xi_l(y) =\Xi_{l,\Lambda}(y) := \frac{\Theta_{\DL}(y)^k} {\Theta_\Lambda(y)}, \quad y>0.
\end{equation}
(Note that when $l=1$, $k=n$, $\DL=\Z$, and this definition reduces to the earlier definition of the secrecy function of a unimodular lattice.)

When $l$ is not a square, $n$ must necessarily be even, as we have noted in Section \ref{secn_intro}.
When $l$ is a square, $n$  need not be even, as the example of $\CF$ attests. In such cases, the definition above of the secrecy function involves a square root of the theta series of $\DL$. (Of course, we are scaling up the theta series, not the lattice!)

It is reasonable now to modify the original conjecture and make the following \textit{$l$-modular secrecy function conjecture:} that for all $l$-modular lattices, the $l$-modular secrecy function attains its (global) maximum at $1/\sqrt{l}$.  We show in the next section that this new conjecture holds for various $2$-modular lattices in small dimension.  But we can see immediately that it holds for $\CF$ as follows:

\begin{equation}
\Xi_l(y) =\Xi_{l,\CF}(y) = \frac{(\Th_3(y)\Th_3(4y))^{3/2}} {\Th_3(y)\Th_3(2y)\Th_3(4y)} = \frac {(\Th_3(y)\Th_3(4y))^{1/2}} {\Th_3(2y)}.
\end{equation}
But we have already seen above in Section \ref{CF_Secn} that $\Xi_l(y)^2 =  \dfrac {\Th_3(y)\Th_3(4y)} {\Th_3^2(2y)}$ has a global maximum at $y = 1/2$, so $\Xi_l(y)$ also has a global maximum at $y=1/2$. Thus, our modified conjecture is true for $\CF$.

\begin{remark} The $l$-modular secrecy function exhibits ``multiplicative symmetry'' about the point $1/\sqrt{l}$, that is, $\Xi_l(a) = \Xi_l(b)$ when $ab=1/l$. The proof is the same as that for the originally defined secrecy function, \cite[Prop. 2]{OggSoleBel}.
\end{remark}

\section {$2$-modular lattices} \label{secn_2_mod}

In the following, we will show that the $l$-modular secrecy function conjecture stated above holds for all the $2$-modular lattices considered in \cite {LinOggSole}.  The starting point is the following description of the theta series of such a lattice:
\begin{theorem} \label{theta_poly_2mod}
The theta series of a $2$-modular lattice $\Lambda$ in dimension $n= 2k$ is a polynomial
\begin{equation} \label{theta_poly_eq}
\Theta_{\Lambda}(y) = f_1(y)^k\left( \sum_{i=0}^{\lfloor k/2 \rfloor} a_i f_2(y)^i\right) =  \sum_{i=0}^{\lfloor k/2 \rfloor} a_i f_1^{k-2i} \Delta_4(y)^i, 
\end{equation}
where $f_1(y) = \Theta_{\CT}(y)$,  $f_2(y) = \dfrac{\Th_2^2(2y) \Th_4^2(y)}{4 \Th_3^2(y) \Th_3^2(2y)}$, and $\Delta_4 = f_1^2f_2$.
\end{theorem}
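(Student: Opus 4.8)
The plan is to recognize $\Theta_{\Lambda}$ as a modular form and to read the claimed polynomial shape off the structure of the corresponding graded ring of forms. Since $\Lambda$ is integral we have $\Theta_{\Lambda}(\tau + 2) = \Theta_{\Lambda}(\tau)$, and since $\Lambda$ is $2$-modular, Poisson summation applied to the similarity $\sqrt{2}\,S(\Lambda^{*}) = \Lambda$ yields a Fricke-type transformation law for $\Theta_{\Lambda}$ under $\tau \mapsto -1/(2\tau)$ of weight $k$ (up to the constant $2^{k/2}$). Together with holomorphy on the upper half-plane and at the cusps, these identities place $\Theta_{\Lambda}$ in the space $M_{k}(\Gamma)$ of weight-$k$ holomorphic modular forms for the modular group $\Gamma$ attached to $2$-modular lattices in Quebbemann's framework \cite{Queb}. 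The key input from that theory is that the graded ring $\bigoplus_{k} M_{k}(\Gamma)$ is the polynomial ring $\C[f_{1},\Delta_{4}]$, with $f_{1}$ of weight $1$ and $\Delta_{4}$ of weight $2$; equivalently $\dim_{\C} M_{k}(\Gamma) = \lfloor k/2 \rfloor + 1$, with the monomials $f_{1}^{k-2i}\Delta_{4}^{i}$, $0 \le i \le \lfloor k/2 \rfloor$, forming a basis. Granting this, $\Theta_{\Lambda} = \sum_{i} a_{i} f_{1}^{k-2i}\Delta_{4}^{i}$, and since $\Delta_{4} = f_{1}^{2} f_{2}$ this is exactly $f_{1}^{k}\sum_{i} a_{i} f_{2}^{i}$, which is the assertion.

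What remains is to pin the two generators down explicitly and to confirm the dimension count in a self-contained way. First, $f_{1} = \Theta_{\CT} = \Th_{3}(\tau)\Th_{3}(2\tau)$ lies in $M_{1}(\Gamma)$ because $\CT = \Z \oplus \sqrt{2}\Z$ is itself a $2$-modular lattice of dimension $2$ (Section \ref{secn_mod_sf}), and it has constant term $1$ in its $q$-expansion, $q = e^{\imath \pi \tau}$. Second, multiplying out $\Delta_{4} = f_{1}^{2} f_{2}$ gives $\Delta_{4} = \tfrac{1}{4}\Th_{2}^{2}(2\tau)\Th_{4}^{2}(\tau)$, which by the last identity in \ref{theta1} equals $\tfrac{1}{8}\Th_{4}^{2}(\tau)\bigl(\Th_{3}^{2}(\tau) - \Th_{4}^{2}(\tau)\bigr)$; from the product representations \ref{prod_reps} (or from this closed form) one reads off $\Delta_{4} = q + O(q^{2})$, and since $\Th_{2}$ and $\Th_{4}$ have no zeros on the upper half-plane, $\Delta_{4}$ is a holomorphic weight-$2$ form vanishing at $\imath\infty$, i.e., a cusp form. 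Third, because $f_{1} = 1 + O(q)$ and $\Delta_{4} = q + O(q^{2})$, the monomial $f_{1}^{k-2i}\Delta_{4}^{i}$ has $q$-expansion $q^{i} + O(q^{i+1})$; the $\lfloor k/2 \rfloor + 1$ such monomials therefore have pairwise distinct orders of vanishing at $\imath\infty$, hence are linearly independent, hence --- having the right cardinality --- span $M_{k}(\Gamma)$.

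The main obstacle is precisely the modular-form input just quoted: correctly identifying the group $\Gamma$ and establishing both $M_{\ast}(\Gamma) = \C[f_{1},\Delta_{4}]$ and the dimension formula $\dim M_{k}(\Gamma) = \lfloor k/2 \rfloor + 1$. The subtlety is that the $2$-modular lattices at issue --- already $\CT$ itself --- are typically \emph{odd}, so $\Theta_{\Lambda}$ is not a form for the Fricke group $\Gamma_{0}(2)^{+}$ in the classical even-lattice sense (it fails to be invariant under $\tau \mapsto \tau+1$); one must instead invoke the theta-type group appropriate to this setting and the corresponding version of Quebbemann's structure theorem (equivalently, realize $\Theta_{\Lambda}$ as a form on a suitable congruence group together with its Fricke eigenvalue and appeal to the known ring structure there). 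Once this input is granted, the identification of $f_{1}$ and $\Delta_{4}$, the computation of the $q$-expansion of $\Delta_{4}$, and the linear-independence count are all routine manipulations with the formulas \ref{prod_reps} and \ref{theta1}.
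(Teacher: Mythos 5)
Your proposal follows the right general template (theta series as a modular form for a Fricke-type group, plus a structure theorem for the graded ring), but it has a genuine gap at exactly the point you label ``the main obstacle,'' and the gap is not just a missing citation. Your self-contained part establishes only a \emph{lower} bound on $\dim M_k(\Gamma)$: since $f_1 = 1+O(q)$ and $\Delta_4 = q+O(q^2)$, the monomials $f_1^{k-2i}\Delta_4^i$ have distinct orders of vanishing at the cusp and are linearly independent, so $\dim M_k(\Gamma)\ge \lfloor k/2\rfloor+1$. To conclude that they \emph{span} --- which is the entire content of the theorem --- you need the matching upper bound $\dim M_k(\Gamma)\le \lfloor k/2\rfloor+1$, and your phrase ``having the right cardinality'' silently presupposes it. That upper bound is the structure theorem $M_*(\Gamma)=\C[f_1,\Delta_4]$ itself; it requires a valence-formula or fundamental-domain computation for the correct group and multiplier system, and as written your argument is circular on this point.

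More seriously, the input you propose to ``grant'' carries a hypothesis you never verify. The Rains--Sloane result the paper invokes applies to strongly $N$-modular lattices that are \emph{rationally equivalent to} $(\CT)^k$; the rational equivalence class is what pins down the character/multiplier of $\Theta_{\Lambda}$ on the relevant congruence group, so that the theta series of every such lattice lands in one and the same $(\lfloor k/2\rfloor+1)$-dimensional space. Your Poisson-summation argument records only the behavior under $\tau\mapsto\tau+2$ and the Fricke involution, which does not by itself place $\Theta_{\Lambda}$ in the space where the ring structure is known --- this is precisely the ``odd lattice'' subtlety you flag but do not resolve. The paper's actual mathematical work here is Theorem \ref{2-mod_ratnl_eq} (proved in Appendix \ref{App_2-mod}): every $2$-modular lattice in dimension $2k$ is rationally equivalent to $(\CT)^k$, shown by matching discriminant, signature, and the Hasse--Witt invariants at all primes via the local-global classification of rational quadratic forms (odd primes handled by a $p$-integral diagonalization, the prime $2$ by Hilbert reciprocity). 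That verification, together with the remark that $2$-modularity implies strong $2$-modularity because $2$ is prime, is what licenses the appeal to the cited structure theorem; your proposal omits it entirely. If you want a proof along your lines, you must either prove the dimension upper bound for the correct group and multiplier directly, or verify the rational-equivalence hypothesis as the paper does.
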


This theorem follows from a more general theorem of Rains and Sloane (\cite[Theorem 9, Corollary 3]{RS}. Although it is only applied to odd $2$-modular lattices in \cite[Eqns 29, 30]{LinOggSole}, the theorem above actually holds for any $2$-modular lattice.  We can see this as follows: the theorem and corollary referred to in \cite{RS} apply to strongly $l$-modular lattices that are rationally equivalent to $(\CL)^k$ (where $n=2k$).  The definition of strongly modularity in \cite{RS} (see the discussion in that paper, following Theorem 6) shows that when $l$ is prime, any $l$-modular lattice is automatically strongly $l$-modular. We thus need the following theorem to enable us to apply the results of \cite[Theorem 9, Corollary 3]{RS} to any $2$-modular lattice: 

\begin{theorem} \label{2-mod_ratnl_eq}
 All $2$-modular lattices in $\R^n$ ($n=2k$) are rationally equivalent to $(\CT)^k$.
\end{theorem}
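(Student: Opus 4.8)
The plan is to prove this purely form-theoretically, via the Hasse--Minkowski theorem: two nondegenerate quadratic forms over $\Q$ of the same rank are $\Q$-equivalent if and only if they are equivalent over $\R$ and over $\Q_p$ for every prime $p$. I would apply this with $q$ the quadratic form of an arbitrary $2$-modular lattice $\Lambda\subset\R^n$ ($n=2k$) and $q_0$ the form of $(\CT)^k$, whose Gram matrix is $\mathrm{diag}(1,2,1,2,\dots,1,2)$. Both $q$ and $q_0$ are positive definite of rank $2k$, and both have determinant $2^k$ (for $\Lambda$, because an $l$-modular lattice has determinant $l^{n/2}$, as recalled in Section \ref{secn_intro}; for $(\CT)^k$ it is immediate). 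Hence at the archimedean place the two forms agree (same signature, so the same Hasse--Witt invariant $\epsilon_\infty$, which is $+1$ in the standard normalization since all diagonal entries of a positive definite diagonalization are positive), and their discriminants agree at every place. It remains only to match the local forms at each prime $p$.

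First I would dispose of the odd primes. Since $\det\Lambda=2^k$ is a unit in $\Z_p$ for $p$ odd and $\Lambda$ is integral, $\Lambda\otimes\Z_p$ is a unimodular $\Z_p$-lattice; as $p$ is odd it is diagonalizable over $\Z_p$ with unit diagonal entries $u_1,\dots,u_{2k}$, and the Hasse--Witt invariant of such a form is $\prod_{i<j}(u_i,u_j)_p=1$, because the Hilbert symbol of two $p$-adic units is trivial for $p$ odd. The same applies verbatim to $(\CT)^k$, whose Gram matrix $\mathrm{diag}(1,2,\dots,1,2)$ already has unit entries over $\Z_p$. So at every odd $p$ the forms $q$ and $q_0$ have the same rank, the same determinant, and the same (trivial) Hasse--Witt invariant, hence are $\Q_p$-equivalent by the classification of quadratic forms over $\Q_p$.

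The remaining place $p=2$ is where the argument is most delicate, since a direct computation of the $2$-adic Hasse--Witt invariant of $\Lambda$ from the $2$-modularity --- for instance by Jordan-splitting $\Lambda\otimes\Z_2=L\perp 2L$ using the duality $\Lambda\cong 2\Lambda^*$ and then comparing invariants --- is unpleasant and not obviously conclusive. Instead I would sidestep it with Hilbert reciprocity: for any nondegenerate form $\phi$ over $\Q$ one has $\prod_v\epsilon_v(\phi)=1$, the product over all places. By the preceding two paragraphs $\epsilon_v(q)=\epsilon_v(q_0)$ for every place $v\neq 2$, and the discriminants agree at every place; reciprocity then forces $\epsilon_2(q)=\epsilon_2(q_0)$ as well. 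Since $q$ and $q_0$ also have equal rank and equal determinant over $\Q_2$, the classification of quadratic forms over $\Q_2$ gives $q\cong q_0$ over $\Q_2$. Having matched $q$ and $q_0$ over $\R$ and over every $\Q_p$, Hasse--Minkowski yields $q\cong q_0$ over $\Q$, i.e. $\Lambda$ is rationally equivalent to $(\CT)^k$.

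The main obstacle is precisely the prime $2$: any attempt to read off the $2$-adic invariant directly from the $2$-modular structure is messy, and the clean resolution is the observation that the real place and the odd primes already pin it down through Hilbert reciprocity. It is worth noting that positive-definiteness (to fix $\epsilon_\infty$) together with integrality and $\det=2^k$ (to force triviality of the Hasse--Witt invariant at odd primes) are what is really being used here; these hypotheses alone, without the full strength of $2$-modularity, already force the rational equivalence, which is a useful sanity check --- one can verify, for example, that a hypothetical lattice whose $2$-adic completion had $\epsilon_2=-1$ would require a compensating $\epsilon_p=-1$ at some odd prime, contradicting unimodularity there.
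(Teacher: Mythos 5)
Your proposal is correct and follows essentially the same route as the paper's own argument: both reduce rational equivalence to matching rank, signature, discriminant, and Hasse--Witt invariants, both kill the odd primes by diagonalizing the (unimodular at $p$) form with $p$-adic unit entries so that all Hilbert symbols are trivial, and both recover the invariant at $p=2$ from Hilbert reciprocity rather than by direct computation. The only difference is cosmetic: the paper spells out the diagonalization step as an explicit lemma over $\Z_{(p)}$, whereas you invoke the standard $\Z_p$-unimodular diagonalization directly.
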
\begin{proof} This is well known, and falls out easily from the classification theorem for quadratic forms over $\Q$. For lack of a specific reference, we sketch the proof in Appendix \ref{App_2-mod}. \end{proof}

For us, since $2$ is prime, $\CT$ is the same as $\DT$,  we  find
\begin{equation} \label{Xi_odd_2_mod}
\Xi_{2,\Lambda}(y) = \Xi_2(y) = \left(\sum_{i=0}^{\lfloor k/2 \rfloor} a_i f_2(y)^i\right)^{-1}.
\end{equation}
We will study the general behavior  of such a polynomial function of $f_2$ and then apply our results to the specific theta series computed in \cite{LinOggSole}.

We have, using Equations \ref{theta1}: 
\begin{equation} \label{f_2_desc}
f_2(y)=\frac{\vartheta_2^2(2y)\vartheta_4^2(y)}{4\vartheta_3^2(y)\vartheta_3^2(2y)}=\frac{(\vartheta_3^2(y)-\vartheta_4^2(y))\vartheta_4^2(y)}{4(\vartheta_3^2(y)+\vartheta_4^2(y))\vartheta_3^2(y)}=\frac{(1-\alpha)\alpha}{4(1+\alpha)},
\end{equation}
where $\alpha=\alpha(y) = \dfrac{\vartheta_4^2}{\vartheta_3^2}(y)$.

\begin{lemma} \label{kprime_increasing}
 The function $\dfrac{\vartheta_4}{\vartheta_3}(y )$ is strictly increasing  for (positive) real y, and as $y\rightarrow 0$, the function approaches $0$, and as $y\rightarrow \infty$, the function approaches $1$.
\end{lemma}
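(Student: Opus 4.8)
Proof proposal. The plan is to reduce the lemma to the Jacobi product formulas in Equation~\ref{prod_reps}. Setting $q=e^{-\pi y}$ with $y>0$, I would first cancel the common factors $\prod_{n\ge 1}(1-q^{2n})$ that appear in both $\vartheta_3(q)$ and $\vartheta_4(q)$, which leaves
\[
\frac{\vartheta_4}{\vartheta_3}(y)
=\prod_{n=1}^{\infty}\left(\frac{1-q^{2n-1}}{1+q^{2n-1}}\right)^{2}
=\prod_{n=1}^{\infty}\tanh^{2}\!\left(\frac{(2n-1)\pi y}{2}\right),
\]
using $\dfrac{1-e^{-u}}{1+e^{-u}}=\tanh(u/2)$. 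So the task becomes to understand the infinite product $h(y):=\prod_{n\ge 1}g_n(y)$, where $g_n(y)=\tanh^{2}\!\bigl((2n-1)\pi y/2\bigr)$.

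Next I would record two elementary facts about the factors. First, each $g_n$ is a strictly increasing bijection of $(0,\infty)$ onto $(0,1)$, since $\tanh$ is. Second, $1-g_n(y)=\bigl(e^{(2n-1)\pi y/2}+e^{-(2n-1)\pi y/2}\bigr)^{-2}\le 4e^{-(2n-1)\pi y}$, so $\sum_n(1-g_n)$ converges uniformly on compact subsets of $(0,\infty)$ and the product $h(y)$ therefore converges locally uniformly to a strictly positive limit. Strict monotonicity of $h$ I would then obtain by comparing partial products: for $0<y_1<y_2$ every ratio $g_n(y_1)/g_n(y_2)$ lies in $(0,1)$, hence $\prod_{n=1}^{N}g_n(y_1)/g_n(y_2)\le g_1(y_1)/g_1(y_2)<1$ for every $N$, and letting $N\to\infty$ (the denominators and numerators being positive) gives $h(y_1)/h(y_2)\le g_1(y_1)/g_1(y_2)<1$, i.e. $h(y_1)<h(y_2)$.

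For the boundary behavior: as $y\to 0^{+}$ one has $0<h(y)\le g_1(y)=\tanh^{2}(\pi y/2)\to 0$, so $h(y)\to 0$; and as $y\to\infty$, the inequality $\prod_{n=1}^{N}g_n\ge 1-\sum_{n=1}^{N}(1-g_n)$ (valid since each $g_n\in[0,1]$, by a one-line induction), together with $h(y)\le 1$, forces $0\le 1-h(y)\le\sum_{n\ge 1}4e^{-(2n-1)\pi y}\to 0$, so $h(y)\to 1$. I expect the only subtle point to be the passage from ``each factor strictly increasing'' to ``$h$ strictly increasing,'' which the partial-product comparison disposes of; everything else is routine. (Alternatives I could take: differentiate $\log h(y)=2\sum_{n\ge 1}\log\tanh((2n-1)\pi y/2)$ term by term after justifying the interchange, or derive the $y\to 0^{+}$ limit from the modular transformation $\vartheta_4(\imath y)/\vartheta_3(\imath y)=\vartheta_2(\imath/y)/\vartheta_3(\imath/y)$; the product approach above is cleanest since it requires no such justifications.)
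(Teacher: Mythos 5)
Your proposal is correct and follows essentially the same route as the paper's proof (Appendix~\ref{App_T4/T3}): the same cancellation to $\prod_{m\ge 1}\left(\frac{1-q^{2m-1}}{1+q^{2m-1}}\right)^{2}$, the same factorwise monotonicity combined with a partial-product comparison anchored on the first factor to get \emph{strict} increase, and the same bound by the first factor for $y\to 0^{+}$; the only substantive difference is that you obtain the $y\to\infty$ limit inside the product via $\prod_{n=1}^{N}g_n\ge 1-\sum_{n=1}^{N}(1-g_n)$, whereas the paper returns to the theta series and sandwiches $\vartheta_3(q)$ and $\vartheta_4(q)$ between geometric bounds as $q\to 0$. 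One cosmetic slip: $1-\tanh^{2}(x)=4\left(e^{x}+e^{-x}\right)^{-2}$, not $\left(e^{x}+e^{-x}\right)^{-2}$, though your subsequent bound $1-g_n(y)\le 4e^{-(2n-1)\pi y}$ remains valid.
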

\begin{proof} A formal proof that takes care of intricacies of infinite products and interchanges of limits is in Appendix \ref{App_T4/T3}. The intuition is as follows: Using the product representations of $\vartheta_4$ and $\vartheta_3$ in Equations \ref{prod_reps}, we have
\begin{eqnarray*}
\frac{\vartheta_4}{\vartheta_3}(y) &=& \frac{\prod_{m=1}^{\infty}(1-q^{2m})(1-q^{2m-1})^2}{\prod_{m=1}^{\infty}(1-q^{2m})(1+q^{2m-1})^2}\\ 
&=&\prod_{m=1}^{\infty}\left(\frac{1-q^{2m-1}}{1+q^{2m-1}}\right)^2=\prod_{m=1}^{\infty}\left(\frac{2}{1+q^{2m-1}}-1\right)^2 \\ 
&=& \prod_{m=1}^{\infty}\left(\frac{2}{1+q^{2m-1}}-1\right)^2. 
\end{eqnarray*}

Now, as $y$ increases, $q$ decreases, and hence, $\left(\frac{2}{1+q^{2m-1}}-1\right)$ increases. This shows that the function is increasing. Furthermore, as $y\rightarrow 0$, $q^{2m-1}\rightarrow 1$, and $\left(\frac{2}{1+q^{2m-1}}-1\right)\rightarrow 0$. As $y\rightarrow \infty$, $q^{2m-1}\rightarrow 0$, and $\left(\frac{2}{1+q^{2m-1}}-1\right)\rightarrow 1$.

\end{proof}

\begin{lemma} \label{max_f}
The function
\[
f(x)=\frac{(1-x)x}{(1+x)}
\]
has a unique maximum in the open interval $(0,1)$, and this maximum is met at the point $x = \sqrt{2}-1$.
\end{lemma}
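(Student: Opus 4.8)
The plan is to treat this as a routine single–variable calculus problem, being careful only to stay inside the interval $(0,1)$. First I would write $f(x) = \dfrac{x-x^2}{1+x}$ and differentiate by the quotient rule, obtaining
\[
f'(x) = \frac{(1-2x)(1+x) - (x-x^2)}{(1+x)^2} = \frac{1 - 2x - x^2}{(1+x)^2}.
\]
On $(0,1)$ the denominator is positive, so the sign of $f'$ is that of the quadratic $g(x) = 1 - 2x - x^2 = -(x^2 + 2x - 1)$, whose roots are $x = -1 \pm \sqrt{2}$. Only $x = \sqrt{2}-1$ lies in $(0,1)$, so this is the unique critical point of $f$ in the open interval.

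Next I would determine the sign of $g$ on $(0,1)$. Since $g$ is a downward–opening parabola with $g(0) = 1 > 0$ and $g(1) = -2 < 0$, we get $g(x) > 0$ for $0 < x < \sqrt{2}-1$ and $g(x) < 0$ for $\sqrt{2}-1 < x < 1$. Hence $f$ is strictly increasing on $(0,\sqrt{2}-1)$ and strictly decreasing on $(\sqrt{2}-1,1)$, so $x = \sqrt{2}-1$ is the unique maximum of $f$ on $(0,1)$. As a consistency check, $f(0) = f(1) = 0$ while $f(x) > 0$ throughout $(0,1)$, which is exactly what one expects of an interior maximum; one may also record the maximal value $f(\sqrt{2}-1) = \dfrac{(\sqrt{2}-1)(2-\sqrt{2})}{\sqrt{2}} = 3 - 2\sqrt{2}$.

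There is no real obstacle here; the only point requiring a little care is to confine the analysis to $(0,1)$, so that the extraneous root $x = -1-\sqrt{2}$ of $g$, and the bad behavior of $f$ outside $[0,1]$ (where $1+x$ can vanish and $f$ changes sign and becomes unbounded), play no role. Everything else is a direct computation.
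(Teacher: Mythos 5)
Your proof is correct, and it is exactly the routine single-variable calculus argument the paper has in mind: the paper's own proof of this lemma is just the remark ``This is straightforward,'' so you have simply supplied the details it omits. The derivative computation, the identification of $x=\sqrt{2}-1$ as the only critical point in $(0,1)$, the sign analysis, and the value $f(\sqrt{2}-1)=3-2\sqrt{2}$ (which is consistent with the paper's $\beta=(3-2\sqrt{2})/4\approx 0.0429$ in Remark~\ref{rem:val_f2_at_peak}) are all accurate.
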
 
\begin{proof} This is straightforward.
\end{proof}

\begin{remark} \label{rem:val_f2_at_peak}
The value of $f_2$ when $\alpha = \sqrt{2}-1$ is $\dfrac{(1-(\sqrt{2}-1))(\sqrt{2}-1)}{4(1+\sqrt{2}-1)} \approx 0.0429$. We will denote this value by $\beta$ in what follows. 
\end{remark}
\begin{lemma} \label{lem:val_alpha_at_peak}
The quantity $\dfrac{\vartheta_4^2}{\vartheta_3^2}(y)$ takes on the value $\sqrt{2}-1$ precisely when $y={1}/{\sqrt{2}}$.
\end{lemma}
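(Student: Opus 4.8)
The plan is to pin down the value $\alpha(1/\sqrt{2}) = \sqrt{2}-1$, where $\alpha(y) := (\vartheta_4/\vartheta_3)^2(y)$, by deriving two functional relations for $\alpha$ and evaluating them at the reciprocal pair $\{1/\sqrt{2},\ \sqrt{2}\}$. Once the value is known, the word ``precisely'' in the statement is immediate, since $\alpha$ is strictly increasing by Lemma \ref{kprime_increasing} and hence takes each value in $(0,1)$ exactly once.

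First I would set up the two relations. Writing $q = e^{-\pi y}$ we have $\alpha(y) = k'(q)$ in the notation of \ref{kl_def}. From the second relation in \ref{mod_eq}, namely $k(q) = (1-l'(q))/(1+l'(q))$ with $l'(q) = k'(\sqrt{q})$, together with $k^2 + k'^2 = 1$ (the first of Equations \ref{theta1}), a short manipulation (squaring and simplifying) yields the doubling relation
\[
\alpha(2y) \;=\; \frac{2\sqrt{\alpha(y)}}{1+\alpha(y)}.
\]
Next, the Jacobi imaginary transformation $\vartheta_3(-1/\tau) = \sqrt{-\imath\tau}\,\vartheta_3(\tau)$, $\vartheta_4(-1/\tau) = \sqrt{-\imath\tau}\,\vartheta_2(\tau)$, $\vartheta_2(-1/\tau) = \sqrt{-\imath\tau}\,\vartheta_4(\tau)$ (see \cite[Chap.\ 10]{steinandshakarchi}), applied at $\tau = \imath y$ whose negative reciprocal is $\imath/y$, gives $\alpha(1/y) = (\vartheta_2/\vartheta_3)^2(y)$; invoking $k^2 + k'^2 = 1$ once more this becomes
\[
\alpha(y)^2 + \alpha(1/y)^2 \;=\; 1.
\]

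Now I would apply both relations at $y = 1/\sqrt{2}$. Since $2\cdot(1/\sqrt{2}) = \sqrt{2}$ and $(1/\sqrt{2})\cdot\sqrt{2} = 1$, writing $a = \alpha(1/\sqrt{2})$ and $b = \alpha(\sqrt{2})$ we get $b = 2\sqrt{a}/(1+a)$ and $a^2 + b^2 = 1$. Eliminating $b$ produces $a^2(1+a)^2 + 4a = (1+a)^2$, that is, $a^4 + 2a^3 + 2a - 1 = 0$, which factors as $(a^2 + 2a - 1)(a^2 + 1) = 0$. Since $0 < a < 1$ by Lemma \ref{kprime_increasing}, the only admissible root is $a = \sqrt{2}-1$. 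Hence $\alpha(1/\sqrt{2}) = \sqrt{2}-1$, and by strict monotonicity of $\alpha$ this is the unique $y$ at which $\alpha$ equals $\sqrt{2}-1$.

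The only place errors could creep in is the bookkeeping: keeping $y$, $\tau = \imath y$, and the nome $q = e^{\imath\pi\tau}$ consistent, and tracking which of $k$, $k'$, $l$, $l'$ sits at which argument when the two relations are combined; the algebra itself collapses cleanly, and $\sqrt{2}-1$ is nothing but the classical second singular modulus $k_2$. If one prefers to avoid spelling out the derivations of the two relations, an equivalent route is to quote $k(e^{-\pi\sqrt{2}}) = \sqrt{2}-1$ directly from \cite{B2} and use the imaginary transformation only to identify $\alpha(1/\sqrt{2})$ with $k(e^{-\pi\sqrt{2}})$.
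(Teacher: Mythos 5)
Your proof is correct, and it takes a genuinely different route from the paper's. The paper identifies the value of $y$ via the theory of complete elliptic integrals: it invokes $\pi K'(k)/K(k) = -\log q$ together with the Landen-type identity $K(\alpha) = \tfrac{1}{1+\alpha}K\bigl(\tfrac{2\sqrt{\alpha}}{1+\alpha}\bigr)$ to establish $\tfrac{K'}{K}(\sqrt{2}-1)=\sqrt{2}$, and then reads off $y=1/\sqrt{2}$ from $-\log q = \pi y$. You instead stay entirely inside the theta-function calculus already set up in Section \ref{CF_Secn}: the degree-$2$ modular equation (Equations \ref{mod_eq}) gives the duplication formula $\alpha(2y)=2\sqrt{\alpha(y)}/(1+\alpha(y))$, the Jacobi imaginary transformation gives $\alpha(y)^2+\alpha(1/y)^2=1$, and the coincidence $2\cdot\tfrac{1}{\sqrt{2}}=\tfrac{1}{1/\sqrt{2}}=\sqrt{2}$ turns these into two equations in the single unknown $a=\alpha(1/\sqrt{2})$, whose admissible root is $\sqrt{2}-1$ (your quartic $a^4+2a^3+2a-1=(a^2+2a-1)(a^2+1)$ checks out, as does the derivation of both functional relations). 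Both proofs then use Lemma \ref{kprime_increasing} identically for the word ``precisely.'' What your approach buys is self-containedness relative to the paper's existing toolkit: no elliptic integrals are needed, only the modular equation the paper has already quoted plus the inversion formula $\vartheta_3(-1/\tau)=\sqrt{-\imath\tau}\,\vartheta_3(\tau)$ (and its companions), which the paper does not state explicitly but which underlies the multiplicative symmetry remark in Section \ref{secn_mod_sf}. What the paper's approach buys is brevity given the reference \cite{B2}, since $\tfrac{K'}{K}(\sqrt{2}-1)=\sqrt{2}$ is a catalogued singular value there; morally the two arguments are the same second-singular-modulus computation ($k_2=\sqrt{2}-1$) expressed in different languages.
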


\begin{proof} This is in Appendix \ref{app:alpha_etc.} \end{proof}

 We now use the previous results to prove the following: 
\begin{proposition}
A necessary and sufficient condition for $\Xi_2(y)$ to have a global maximum at $y = 1/\sqrt{2}$ is that the polynomial $\left(\Xi_2(f_2)\right)^{-1} = \left(\sum_{i=0}^{\lfloor k/2 \rfloor} a_i f_2(y)^i\right)$ in the variable $f_2$  (Equation \ref{Xi_odd_2_mod}), restricted to the domain $0 < f_2 \le \beta$ where $\beta$ as in Remark \ref{rem:val_f2_at_peak} above, have a global minimum at $f_2 = \beta$.
\end{proposition}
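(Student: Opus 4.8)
The plan is to transport the problem from the variable $y$ to the variable $u = f_2$. By Equation \ref{Xi_odd_2_mod} we have $\Xi_2(y) = 1/P(f_2(y))$, where $P(u) := \sum_{i=0}^{\lfloor k/2 \rfloor} a_i u^i$; and since $\Xi_2$ is a quotient of (everywhere positive) theta series, $P(f_2(y)) > 0$ for every $y > 0$. Hence $y = 1/\sqrt{2}$ is a global maximizer of $\Xi_2$ on $(0,\infty)$ if and only if it is a global minimizer of $y \mapsto P(f_2(y))$. The entire content of the proposition therefore reduces to understanding the substitution $y \mapsto f_2(y)$, and in particular to showing that it is a continuous surjection of $(0,\infty)$ onto the interval $(0,\beta]$.

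To that end I would first record the range of $f_2$. By Lemma \ref{kprime_increasing} the function $\vartheta_4/\vartheta_3$ is continuous and strictly increasing on $(0,\infty)$, running from $0$ to $1$; hence $\alpha(y) = (\vartheta_4/\vartheta_3)^2(y)$ is a continuous strictly increasing bijection of $(0,\infty)$ onto $(0,1)$. By Equation \ref{f_2_desc}, $f_2(y) = \tfrac14 f(\alpha(y))$, where $f(x) = (1-x)x/(1+x)$ is the function of Lemma \ref{max_f}. Since $f$ is continuous on $(0,1)$ with $f(0^+) = f(1^-) = 0$ and a unique interior maximum at $x = \sqrt{2}-1$ (Lemma \ref{max_f}), the intermediate value theorem shows that $f$ maps $(0,1)$ onto $(0, 4\beta]$, attaining the value $4\beta$ only at $x = \sqrt{2}-1$. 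Composing with the bijection $y \mapsto \alpha(y)$, we conclude that $y \mapsto f_2(y)$ is a continuous surjection of $(0,\infty)$ onto $(0,\beta]$, and that $f_2(y) = \beta$ precisely when $\alpha(y) = \sqrt{2}-1$, i.e.\ (by Lemma \ref{lem:val_alpha_at_peak}) precisely when $y = 1/\sqrt{2}$.

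With the surjection in hand, the equivalence is formal. For sufficiency, suppose $P$ restricted to $(0,\beta]$ has its global minimum at $u = \beta$. Then for every $y > 0$ we have $f_2(y) \in (0,\beta]$, hence $P(f_2(y)) \ge P(\beta) = P(f_2(1/\sqrt{2}))$; taking reciprocals gives $\Xi_2(y) \le \Xi_2(1/\sqrt{2})$, so $\Xi_2$ has a global maximum at $y = 1/\sqrt{2}$. For necessity, suppose $\Xi_2$ has a global maximum at $y = 1/\sqrt{2}$, so that $P(f_2(y)) \ge P(f_2(1/\sqrt{2})) = P(\beta)$ for all $y > 0$. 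Given an arbitrary $u \in (0,\beta]$, surjectivity provides some $y$ with $f_2(y) = u$, whence $P(u) = P(f_2(y)) \ge P(\beta)$; thus $\beta$ is a global minimizer of $P$ on $(0,\beta]$.

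I expect the only step requiring real care to be the surjectivity claim $f_2\big((0,\infty)\big) = (0,\beta]$: it needs the monotonicity of $\alpha(y)$ from Lemma \ref{kprime_increasing} combined with the unimodal shape of $f$ from Lemma \ref{max_f} and an intermediate-value argument, and one must observe that $f_2$ is \emph{not} injective — each value in $(0,\beta)$ is attained at two distinct values of $y$ — so that only surjectivity, not bijectivity, is available. Fortunately surjectivity is exactly what the transfer of global extrema uses, so nothing more is required; everything after that step is bookkeeping with the lemmas already established.
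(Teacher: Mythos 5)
Your proposal is correct and follows essentially the same route as the paper's own proof: determine that $f_2$ maps $(0,\infty)$ onto $(0,\beta]$ with the value $\beta$ attained exactly at $y=1/\sqrt{2}$ (via Lemmas \ref{kprime_increasing}, \ref{max_f}, and \ref{lem:val_alpha_at_peak}), then transfer global extrema through the positive substitution. Your only addition is to make explicit the intermediate-value argument for surjectivity of $f_2$ onto $(0,\beta]$ --- which the paper asserts with the phrase ``$f_2(y)$ ranges in $(0,\beta]$'' --- and you are right that this surjectivity is precisely what the necessity direction requires.
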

\begin{proof}By Equation \ref{f_2_desc}, $f_2(y) = \dfrac{(1-\alpha)\alpha}{4(1+\alpha)}$, 
where $\alpha=\alpha(y)=\dfrac{\vartheta_4^2}{\vartheta_3^2}(y)$, so by Lemma \ref{max_f}, $f_2(\alpha)$ has a unique maximum when $\alpha = \sqrt{2}-1$.  By Remark \ref{rem:val_f2_at_peak} this maximum is $\beta$. Moreover, by Lemma \ref{lem:val_alpha_at_peak}, $\alpha = \sqrt{2}-1$ precisely when $y = 1/\sqrt{2}$. Thus, for other values of $y$, $f_2(y) < \beta$, and of course, $f_2(y) > 0$ by the definition of $f_2$ and by the fact that $\alpha\in(0,1)$. We thus find that as $y$ ranges in $(0,\infty)$, $f_2(y)$ ranges in $(0,\beta]$, and $f_2(y) = \beta$ precisely when $y=1/\sqrt{2}$. 
It is now clear that $\Xi_2(y)$, with $0 < y < \infty$, attains its  global maximum when $y = 1/\sqrt{2}$ if and only if $\left(\Xi_2(y)\right)^{-1} $, with $0 < y < \infty$, attains its  global minimum when $y = 1/\sqrt{2}$ if and only if
$\left(\Xi_2(f_2)\right)^{-1} $, with $f_2\in(0, \beta]$, attains its global minimum at $f_2 = \beta$.

\end{proof}
\begin{corollary}\label{decr_is_enough}
If the polynomial $\left(\Xi_2(f_2)\right)^{-1}$ is decreasing in $(0,\beta]$, then $\Xi_2$ has a global maximum at $y=1/\sqrt{2}$.
\end{corollary}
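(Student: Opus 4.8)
The plan is to read this off directly from the preceding Proposition. That Proposition supplies a \emph{necessary and sufficient} condition for the conclusion we want: $\Xi_2$ attains a global maximum at $y = 1/\sqrt{2}$ precisely when the polynomial $P(f_2) := \bigl(\Xi_2(f_2)\bigr)^{-1} = \sum_{i=0}^{\lfloor k/2 \rfloor} a_i f_2(y)^i$, considered on the interval $(0,\beta]$ with $\beta$ as in Remark \ref{rem:val_f2_at_peak}, attains its global minimum at the right-hand endpoint $f_2 = \beta$. So all that remains is to observe that the hypothesis of the corollary forces exactly this situation.

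That observation is elementary: if $P$ is decreasing on $(0,\beta]$, then $P(f_2) \ge P(\beta)$ for every $f_2 \in (0,\beta]$, so the global minimum of $P$ on that interval equals $P(\beta)$ and is achieved at $f_2 = \beta$. Feeding this into the Proposition yields immediately that $\Xi_2$ has a global maximum at $y = 1/\sqrt{2}$.

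I would also underline why $(0,\beta]$ — and not all of $(0,\infty)$ in the $f_2$ variable — is the relevant domain: as shown in the proof of the Proposition (via Lemma \ref{max_f} and Lemma \ref{lem:val_alpha_at_peak}), the substitution $y \mapsto f_2(y)$ maps $(0,\infty)$ onto $(0,\beta]$ and takes the value $\beta$ only at $y = 1/\sqrt{2}$, so it is precisely the behaviour of $P$ on $(0,\beta]$ that governs $\Xi_2$. There is no genuine obstacle in this corollary; it is merely the form of the Proposition that is most convenient to apply to the explicit theta series of \cite{LinOggSole}, where for each lattice one checks that a concrete low-degree polynomial in $f_2$ is decreasing on $(0,\beta]$ by a direct derivative computation.
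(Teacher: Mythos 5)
Your proposal is correct and is exactly the argument the paper intends: the corollary is an immediate consequence of the preceding Proposition, since a decreasing polynomial on $(0,\beta]$ attains its global minimum there at the right endpoint $f_2=\beta$. Your additional remark about why $(0,\beta]$ is the relevant domain simply restates what the Proposition's proof already established, so nothing differs in substance.
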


We now consider the odd $2$-modular lattices in \cite[Table 2]{LinOggSole}.  The authors have computed their theta series in terms of $f_2$ and $\Delta_4 = f_1^2 f_2$. Factoring $f_1^{n/2}$ from these series (where $n$ is the ambient dimension), we have the following table, where the third column contains the derivative of the polynomial $\left(\Xi_2(f_2)\right)^{-1}$, and the fourth column checks that this derivative is negative in $(0,\beta]$, i.e, (Corollary \ref{decr_is_enough}) that $\left(\Xi_2(f_2)\right)^{-1}$ is decreasing in $(0,\beta]$:
\vspace {0.2 in}
\begin{center} 
\begin{tabular}{|c|l|l|c|} 
\hline
Dim & $\left(\Xi_2(f_2)\right)^{-1}$ & $d/df_2 \left(\Xi_2(f_2)\right)^{-1}$ & Neg in $(0,\beta]$? \\
\hline
8 & $1-8f_2$ & $-8$ & Yes\\ 
\hline
12 & $1-12 f_2$ & $-12$ & Yes\\
\hline
16 & $1-16 f_2$ & $-16$ & Yes\\
\hline
18 & $1-18 f_2 + 18f_2^2$ & $-18 + 36f_2$ & Yes\\
\hline
20 & $1-20 f_2 + 40 f_2^2$ & $-20 + 80f_2$ & Yes\\
\hline
22 &  \begin{tabular}{l}$1-22 f_2 + 66 f_2^2$\\  $- 4f_2^3$ \end{tabular} & $-22 + 132f_2 -12 f_2^2$& Yes\\
\hline
24 &  \begin{tabular}{l}$1-24 f_2 + 96 f_2^2$\\ $- 28f_2^3$ \end{tabular} & $-24 + 192f_2 -84 f_2^2$& Yes\\
\hline
26 &  \begin{tabular}{l}$1-26 f_2 + 130 f_2^2$ \\$ - 80f_2^3$ \end{tabular} & $-26 + 260 f_2 -240 f_2^2$& Yes\\
\hline
28 & \begin{tabular}{l}$1-28  f_2 + 168 f_2^2 $  \\ $- 176f_2^3+ 32 f_2^4 $\end{tabular} & \begin{tabular}{l}$-28 + 336 f_2 -528 f_2^2$ \\ $+ 128 f_2^3$\end{tabular}& Yes\\
\hline
30 & \begin{tabular}{l}$1-30  f_2 + 210 f_2^2 $  \\ $- 282 f_2^3+ 112 f_2^4 $\end{tabular} & \begin{tabular}{l}$-30 + 420 f_2 -846 f_2^2$ \\ $+ 448 f_2^3$\end{tabular}& Yes\\
\hline

\end{tabular}
\end{center}

\vspace {0.2 in}
Clearly, the modified conjecture holds for these lattices.

For illustration, we graph the $l$-modular secrecy function for the odd $2$-modular lattice in dimension $22$ considered above in Figure \ref{22peak}.

\begin{figure}[h]
  \centering
    \includegraphics[width=0.8\textwidth]{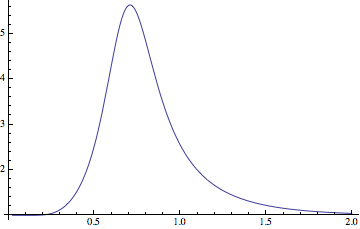}
    \label{22peak}
    \caption{Graph of the $l$-modular secrecy function of the odd $2$-modular $22$-dimensional  lattice considered in \cite[Table 2]{LinOggSole}. It has its maximum at $x=\frac{1}{\sqrt{2}}$. }
\end{figure}

We turn our attention now to the even $2$-modular lattices considered in \cite[Table 1]{LinOggSole}. There are three of them: $D_4$, $BW_{16}$, $HS_{20}$. There, their theta series have been developed in terms of two functions: the theta series of $D_4$ itself (this is a tautological statement for $D_4$ of course!), and $\Delta_{16}$:

\[
\Theta_{BW_{16}}=\Theta_{D_4}^4-96\Delta_{16} \quad \textrm{and}\quad \Theta_{HS_{20}}=\Theta_{D_4}^5-120\Theta_{D_4}\Delta_{16}.
\]
By Theorem \ref{theta_poly_2mod}, these theta series can be also expressed as polynomials in $\Theta_{C^{(2)}}$ and $\Delta_4$. By comparing coefficients, we have
\begin{align*}
\Theta_{D_4}&=\Theta_{C^{(2)}}^2-4\Delta_4\\
\Theta_{BW_{16}}&=\Theta_{C^{(2)}}^8-16\Delta_4\Theta_{C^{(2)}}^6-256\Delta_4^3\Theta_{C^{(2)}}^2+256\Delta_4^4\\
\Theta_{HS_{20}}&=\Theta_{C^{(2)}}^{10}-20\Theta_{C^{(2)}}^8\Delta_4+40\Theta_{C^{(2)}}^6\Delta_4^2
-160\Theta_{C^{(2)}}^4\Delta_4^3+1280\Theta_{C^{(2)}}^2\Delta_4^4-1024\Delta_4^5.\end{align*}
Since $\Theta_{C^{(2)}}=f_1$ and $\Delta_4=f_1^2f_2$, the $2$-modular secrecy functions $\Xi_2$ are 
\begin{align*}
\Xi_{2,D_4}&=(1-4f_2)^{-1}\\
\Xi_{2,BW_{16}}&=(1-16f_2-256f_2^3+256f_2^4)^{-1}\\
\Xi_{2,HS_{20}}&=(1-20f_2+40f_2^2-160f_2^3+1280f_2^4-1024f_2^5)^{-1}.\end{align*}

The function $(1-4f_2)^{-1}$ is clearly increasing in the range of $f_2$. As for $\Xi_{2,BW_{16}}$, the derivative of the denominator is
\[
-16-768f_2^2+1024f_2^3,
\]
which has its only real zero at $f_2\approx 0.78$, and therefore, the denominator is decreasing and the function increasing in $[0, \beta]$.
Finally, the derivative of the denominator of the $2$-modular secrecy function of $HS_{20}$ is
\[
-20+80f_2-480f_2^2+5120f_2^3-5120f_2^4.
\]
The first positive real zero is at $f_2\approx 0.17$, and therefore the denominator is decreasing and the function increasing in $[0, \beta]$.

\appendices

\section {$2$-modular lattices are rationally equivalent to direct sum of copies of $\CT$.} \label{App_2-mod} 

We sketch here the proof of Theorem \ref{2-mod_ratnl_eq}.  We assume basic familiarity with quadratic forms. The proof invokes the local-global theory of quadratic forms over number fields; we only sketch the outlines of the theory and only provide as much detail as would enable one to construct the full proof for oneself. An excellent reference is \cite{Serre}. A very readable account is also  in \cite{Pall}.

We recall first the setup behind rational equivalence: Any symmetric $n\times n$ matrix $A$ with entries in $\Q$ (such as the Gram matrix $G_L$ of an integral lattice $L$ in $\R^n$, whose entries are even in $\Z$) determines a quadratic form $q$ on $\Q^n$ in the standard way: if $e_i$ are the standard basis vectors, then $q(x_1e_1 + \cdots + x_ne_n) = v^t A v$, where $v = (x_1,\dots,x_n)^t$; here  the superscript $t$ stands for transpose.  Conversely, given a quadratic form  $q$ on $\Q^n$, we obtain a symmetric $n\times n$ matrix $A$  with rational entries, with $(i,j)$ entry given by $(q(e_i +e_j) - q(e_i) - q(e_j))/2$, .  We say two quadratic forms $q_1$ and $q_2$ on $\Q^n$ are equivalent over $\Q$, or \textit{rationally equivalent}, if there exists an invertible $n\times n$ matrix with rational entries $S$ such that $S^tA_1S = A_2$, where $A_i$ is the symmetric matrix associated with $q_i$ as above.  Alternatively, two such quadratic forms are rationally equivalent if one!
  can be obtained from the other by a linear change of variables defined over $\Q$.  (These definitions extend in the obvious way to quadratic forms over any field of characteristic different from $2$.) We apply these considerations to lattices: two integral lattices $L_1$ and $L_2$ are said to be rationally equivalent if their associated quadratic forms are rationally equivalent, or equivalently,  if the Gram matrices $G_1$ and $G_2$ of the two lattices are related by $S^t G_1 S = G_2$ for some invertible $n\times n$ matrix $S$ with rational entries.

There is a well-established theory that determines when two quadratic forms defined over $\Q$ are equivalent.  By this theory, the rational equivalence class of a quadratic form on $\Q^n$  which is non degenerate, that is, the determinant of the associated symmetric matrix is nonzero, is determined by the following objects: the \textit{discriminant}, the \textit{signature}, and the \textit{Hasse-Witt invariant} at each (integer) prime $p$. The first two are easy to describe. The \textit{discriminant} of a non degenerate quadratic form defined over $\Q$ is just the class of the determinant of the associated symmetric matrix in $\Q^*/{\Q^*}^2$. As for the signature recall first that given any symmetric $n\times n$ matrix  $A$ with entries in a field $k$ of characteristic different from $2$, there exists a nonsingular $n\times n$ matrix $S$ such that $S^tAS$ is diagonal. The \textit{signature} of a non degenerate quadratic form on $\Q^n$ is 
 just the number of positive entries minus the number of negative entries in any diagonal representation of the quadratic form, thought of as a quadratic form on $\R^n$. (The definition is independent of which diagonal representation is used.)
In our situation, note that the quadratic forms arising from the $2$-modular lattice $L$ and from $(\CT)^k$ are both positive definite, since they yield lengths of vectors in Euclidean space. It follows that all diagonalizations of either quadratic form must consist only of positive elements along the diagonal. Thus, the signature is the same for both lattices. Further, both lattices clearly have the same determinant for their associated quadratic form, namely $2^k$. Thus, to prove the rational equivalence of $L$ and $(\CT)^k$, we only need to consider their Hasse-Witt invariants, and to show that their Hasse-Witt invariants are the same at each prime $p$.

In fact, the Hasse-Witt invariant is defined not only for each integer prime $p$, but also, for $\R$. (It is traditional to think of $\R$ as the completion of $\Q$ at the ``infinite prime.'') In what follows, $v$ will denote either an integer prime $p$ or $\infty$, and $\Q_v$ will accordingly denoted either the field $\Q_p$ of $p$-adic rationals (when $v=p$) or $\R$ (when $v=\infty$). 
Given a non degenerate quadratic form $q$ over the field $\Q_v$, one first takes a diagonal representation $\diag(a_1,\dots, a_n)$ of the associated symmetric matrix.  The \textit{Hasse-Witt invariant} $\epsilon_v(q)$ is defined to be the product of the \textit{Hilbert symbols} $(a_i,a_j)_v$ over all $1 \le i < j \le n)$.  (The definition is independent of which diagonal representation is used.) In turn, given $a$ and $b$ in $\Q_v^*$, the \textit{Hilbert symbol} $(a,b)_v$ is defined to be $1$ if the equation $z^2 - ax^2 -by^2$ has a solution $(x,y,z)\neq (0,0,0)$ in $\Q_v$, and $-1$ otherwise.  A few relevant facts about the Hasse-Witt invariant and the Hilbert symbol are the following: 
\begin{enumerate}
\item \label{mostly_1} The Hasse-Witt invariant  of a quadratic form $q$ defined over $\Q^n$ is $1$ at all but at most a finite number of primes $v$.  (Here, for each prime $v$, we first view $q$ as a quadratic form over $\Q_v^n$ and then calculate $\epsilon_v(q)$.)
\item \label{recip} For a quadratic form $q$ defined over $\Q^n$, the   product over all primes $v$ of $\epsilon_v(q)$ is $1$. 
\item \label{compute_HS} For an odd (integer) prime $p$, given $a$ and $b$ in $\Q_p^*$, the Hilbert symbol $(a,b)_p$ is defined as follows: we first write $a = p^{\alpha} u$ and $b=p^\beta v$, where $u$ and $v$ are units of $\Z_p$. Then
\begin{equation} \label{HS_odd_p}
(a,b) = (-1)^{\alpha \beta (p-1)/2} \left(\dfrac{u}{p}\right)^\beta \left(\dfrac{v}{p}\right)^\alpha,
\end{equation}
where $\left(\dfrac{u}{p}\right)$ is the \textit{Legendre symbol} defined to be $1$ if the class of $u$ in $\F_p$ is a square and $-1$ otherwise.
\end{enumerate}

It follows from the characterization above that if $p$ is odd and $a$ and $b$ are themselves \textit{units} in $\Q_p$ (by units in $\Q_p$ we mean that $\alpha$ and $\beta$ above are both zero, so these are the units of $\Z_p$), then the Hilbert symbol $(a,b)_p$ is $1$. Also, the Hilbert Symbol $(a,b)_{\infty}$ is $1$ whenever both $a$ and $b$ are positive, since $z^2=ax^2+by^2$ will clearly have a nontrivial solution, e.g., $(1,0,\sqrt{a})$, or $(0,1,\sqrt{b})$. (In fact, it is enough that  just one of $a$ or $b$  is positive.) 
Now apply these considerations to the lattice $(\CT)^k$: the associated quadratic form $q_{(\CT)^k}$ is already diagonal, with $k$ $1$s and $k$ $2$s along the diagonal.  For any odd prime $p$, $1$ and $2$ are both units, and therefore, $\epsilon_p (q_{(\CT)^k}) = 1$.  It is clear too that  $\epsilon_{\infty} (q_{(\CT)^k}) = 1$ since $1$ and $2$ are positive. It follows from (\ref{recip}) above that $\epsilon_2 (q_{(\CT)^k}) = 1$ as well.

We now consider $\epsilon_v(q_L)$ at each integer prime and at infinity, where $q_L$ is the quadratic form associated to the $2$-regular lattice $L$. 
As already noted, since $q_L$ is positive definite, any diagonalization over $\R$ must consist of all positive numbers along the diagonal.  Thus, $\epsilon_{\infty}(q_L)=1$.
It is enough now to show that for any odd prime $p$, $\epsilon_p(q_L) = 1$, for then, by (\ref{recip}) above, $\epsilon_2(q_L)$ will be $  1$ as well. The key is the following proposition that describes a diagonalization.  Recall that $\Z_{ ( p ) }$ denotes the localization of $\Z$ at $p$, that is, the ring of all reduced fractions $a/b$ such that $p$ does not divide $b$;  $\Z_{ ( p ) }$ is a unique factorization domain with a single prime, namely $p$, and the reduced fraction $a/b$ above of $\Z_{ ( p ) }$  is divisible by $p$ precisely when $a$ is divisible by $p$. 
Under the embedding $\Q \mapsto \Q_p$, $\Z_{ ( p ) }$  goes to the $p$-adic integers $\Z_p$, and the elements of $\Z_{ ( p ) }$ 
not divisible by $p$ live naturally as units in the $p$-adic integers $\Z_p$.

\begin{proposition} \label{diag_q_odd_prime} Suppose that $A$ is a symmetric matrix in $M_n(\Q)$ and suppose that $p$ does not divide the determinant of $A$, where $p$ is an odd prime.  Then, there exists an  $n\times n$  matrix $S$ with entries in $\Z_{ ( p ) }$ of determinant $\pm1$ such that $S^t A S = \diag(a_1,\dots, a_n)$, where the numerators and denominators of each $a_i$, when written as a reduced fraction, is not divisible by $p$.
\end{proposition}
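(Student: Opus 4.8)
The plan is to prove the statement in the case that actually concerns us, namely when $A$ has entries in $\Z_{(p)}$ (in the application this proposition is used for a Gram matrix, which has integer entries). Write $x \mapsto \bar x$ for reduction modulo the maximal ideal $(p)$ of the local ring $\Z_{(p)}$, whose residue field is $\F_p$. The hypothesis $p \nmid \det A$ then says exactly that $\det A$ is a unit of $\Z_{(p)}$, equivalently that $\bar A \in M_n(\F_p)$ is a nondegenerate symmetric matrix. I would argue by induction on $n$, the case $n = 1$ being immediate (take $S = (1)$); the inductive step isolates, by a change of basis over $\Z_{(p)}$, a single diagonal entry that is a unit and then recurses on the orthogonal complement.

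The one step carrying any content is the following, and it is the only place the hypothesis ``$p$ odd'' is used. Since $2$ is invertible in $\F_p$, polarization shows that a nonzero symmetric bilinear form over $\F_p$ cannot be totally isotropic; as $\bar A$ is nondegenerate, hence nonzero, there is $\bar v \in \F_p^n$ with $\bar v^t \bar A\, \bar v \ne 0$. Lifting $\bar v$ to any $v \in \Z^n \subseteq \Z_{(p)}^n$, one coordinate of $v$ is coprime to $p$, hence a unit of $\Z_{(p)}$, and $d_1 := v^t A v \equiv \bar v^t \bar A\, \bar v \not\equiv 0 \pmod p$ is a unit of $\Z_{(p)}$. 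After a coordinate permutation (a congruence by a permutation matrix, of determinant $\pm 1$) I may assume that unit coordinate is the first one; then the matrix $T$ whose columns are $v, e_2, \dots, e_n$ lies in $\mathrm{GL}_n(\Z_{(p)})$ (its determinant is the first coordinate of $v$), and $A' := T^t A T$ has $(1,1)$-entry $v^t A v = d_1 \in \Z_{(p)}^{*}$.

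From here it is routine linear algebra over $\Z_{(p)}$. Because $d_1$ is a unit, completing the square via the unitriangular matrix $U = \left(\begin{smallmatrix} 1 & -d_1^{-1} b^t \\ 0 & I_{n-1} \end{smallmatrix}\right)$, where $b$ is the part of the first column of $A'$ below the diagonal, keeps all entries in $\Z_{(p)}$, has determinant $1$, and yields $U^t A' U = \diag(d_1, C')$ with $C'$ symmetric over $\Z_{(p)}$; comparing determinants with $(\det T)^2 \det A$ shows $\det C'$ is again a unit, so the inductive hypothesis applies to $C'$. Reassembling the transformations produces $S_0 \in \mathrm{GL}_n(\Z_{(p)})$ with $S_0^t A S_0$ diagonal and every diagonal entry a unit of $\Z_{(p)}$; finally, replacing $S_0$ by $S := S_0 \, \diag((\det S_0)^{-1}, 1, \dots, 1)$ keeps $S$ in $M_n(\Z_{(p)})$, multiplies one diagonal entry by the unit square $(\det S_0)^{-2}$, and arranges $\det S = 1$, giving the proposition as stated. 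I expect the only genuine obstacle to be the non-isotropy fact of the second paragraph — it fails for $p = 2$, which is exactly why the proposition, and Theorem \ref{2-mod_ratnl_eq} that rests on it, are about odd primes; everything else is bookkeeping to keep the transforming matrix $p$-integral with determinant $\pm 1$ and the diagonal entries $p$-adic units.
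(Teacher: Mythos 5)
Your proof is correct and follows essentially the same route as the paper's: produce a unit diagonal entry by a change of basis over $\Z_{(p)}$, complete the square with a unitriangular matrix of determinant $1$, and induct, with the oddness of $p$ entering only in securing a vector $v$ with $v^tAv$ prime to $p$ --- your polarization/non-isotropy argument is just an abstract packaging of the paper's explicit computation $(e_i+e_j)^tA(e_i+e_j)=a_{ii}+a_{jj}+2a_{ij}$. The only real divergences are bookkeeping: you let the intermediate transformation have an arbitrary unit determinant and normalize to $\det S=1$ at the end, whereas the paper keeps every elementary step at determinant $\pm1$; and you explicitly restrict to $p$-integral $A$, which the paper's sketch tacitly assumes as well and which covers the Gram matrices to which the proposition is applied.
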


(For the full statement of the proposition, see \cite[Lemma 5.1]{Pall}.)

\begin{proof} We sketch the proof here. Since $p$ does not divide the determinant, some entry $a_{i,j}$ of $A$ must be prime to $p$.  First suppose that some $a_{i,i}$ is prime to $p$.  Then, we swap the basis vectors $e_1$ and $e_i$, a transformation of determinant $- 1$, to ensure that $a_{1,1}$ is prime to $p$. If all $a_{i,i}$ are divisible by $p$, some $a_{i,j}$ with $i\neq j$ must be prime to $p$. We consider $(e_i + e_j)^t A (e_i + e_j)$: this is $a_{i,i} + a_{j,j} + 2 a_{i,j}$.  Since each $a_{i,i}$ and $a_{j,j}$ are divisible by $p $ and since $p$ is odd and $a_{i,j}$ is prime to $p$, we find  $(e_i + e_j)^t A (e_i + e_j)$ is prime to $p$.  Thus, the transformation $e_1 \mapsto (e_i+e_j)$, $e_i \mapsto e_1$ is of determinant $- 1$, and ensures that $a_{1,1}$ is prime to $p$.  Thus by a change of basis with determinant $- 1$, we can ensure that $a_{1,1}$ is prime to $p$.  We now write $A$ in the block form
$$
A = 
\left(
\begin{array}{cc}
 a_{1,1} & B      \\
 B^t & C     
\end{array}
\right),
$$ and take $S$ to be the matrix
$$
\left(
\begin{array}{cc}
 1 & -a_{1,1}^{-1}B      \\
 0 & I_{n-1}     
\end{array}
\right)
$$ to find
$$
S^t A S = 
\left(
\begin{array}{cc}
 a_{1,1} & 0     \\
 0 & C  - a_{1,1}^{-1}B^tB  
\end{array}
\right).
$$ (Notice that $S$ has determinant $1$.) We now proceed by induction, working in $\Z_{ ( p ) }$, noting that the product of the various basis-change matrices at each stage has determinant $\pm 1$. .

\end{proof}

Since the determinant of the matrix associated to $q_L$ is $2^k$, we may apply
this result to $q_L$. For a given odd prime $p$, take a diagonal representation $\diag(a_1, \dots, a_n)$ of $q_L$ over $\Q_p$ as furnished by the proposition.  Each $a_i$ is nonzero element of $\Z_{ ( p ) }$ not divisible by $p$, and is therefore a unit in $\Q_p$.  Thus, by (\ref{HS_odd_p}) above, the Hasse-Witt invariant $\epsilon_p(q_L)$ is $1$. By (\ref{recip}) above, $\epsilon_2(q_L)$ is also $1$.

Since $L$ and $(\CT)^k$ have the same Hasse-Witt invariant at every prime in addition to having the same signature and discriminant, they are indeed rationally equivalent as claimed.

\section {Proof of Lemma \ref {kprime_increasing}} \label {App_T4/T3}
\begin{proof} We use the product representations of the theta functions. Writing $q=e^{-\pi y}$ as usual, so $0<q<1$, we have
\begin{eqnarray*}
{\vartheta_4(y)} &=& {\prod_{m=1}^{\infty}(1-q^{2m})(1-q^{2m-1})^2} \\
\vartheta_3(y) &=& {\prod_{m=1}^{\infty}(1-q^{2m})(1+q^{2m-1})^2}
\end{eqnarray*}
Since the partial products $P_N = {\prod_{m=1}^{N}(1-q^{2m})(1-q^{2m-1})^2}$ and $Q_N = {\prod_{m=1}^{N}(1-q^{2m})(1+q^{2m-1})^2}$ converge to $\Th_4(y)$ and $\Th_3(y)$ respectively, and since $Q_N$ is clearly not zero for $0<q<1$, the quotient $P_N/Q_N$ converges to $\dfrac{\vartheta_4}{\vartheta_3}(y )$, and we have

\begin{eqnarray*}
\frac{\vartheta_4}{\vartheta_3}(y) &=& \lim_{N\rightarrow\infty} \frac{\prod_{m=1}^{N}(1-q^{2m})(1-q^{2m-1})^2}{\prod_{m=1}^{N}(1-q^{2m})(1+q^{2m-1})^2}\\ 
&=&\lim_{N\rightarrow\infty}\prod_{m=1}^{N}\left(\frac{1-q^{2m-1}}{1+q^{2m-1}}\right)^2=\lim_{N\rightarrow\infty}\prod_{m=1}^{N}\left(\frac{2}{1+q^{2m-1}}-1\right)^2 \\ 
&=& \prod_{m=1}^{\infty}\left(\frac{2}{1+q^{2m-1}}-1\right)^2. 
\end{eqnarray*}
Note that since $0<q<1$, 
\[
0<\left(\frac{2}{1+q^{2m-1}}-1\right)^2<1.
\]

As $y$ increases, $q$ strictly decreases, and $\left(\dfrac{2}{1+q^{2m-1}}-1\right)$ strictly increases. 
Hence, if $y > y'$, then the partial products (note that these start from $2$) 
$$R_N(y) = \prod_{m=2}^{N}\left(\dfrac{2}{1+q^{2m-1}}-1\right)^2$$
 satisfy $P_N(y) > P_N(y')$.  It follows that  $\lim_{N\rightarrow \infty} R_N(y) \ge \lim_{N\rightarrow \infty} R_N(y')$.  Note that $\lim_{N\rightarrow \infty} R_N(y) \neq 0$ for any $y$ with $0<y<1$  since $\Th_4(y)$ and $\Th_3(y)$ and hence $\dfrac{\vartheta_4}{\vartheta_3}(y)$ are nonzero for any $y$ with $0<y<1$.
Writing $q'$ for $e^{-\pi y'}$, we have for $m=1$ that $$\left(\dfrac{2}{1+q}-1\right)^2 > \left(\dfrac{2}{1+q'}-1\right)^2.$$ Hence we find  
\begin{eqnarray*}
\frac{\vartheta_4}{\vartheta_3}(y) &=& \left(\frac{2}{1+q}-1\right)^2 \prod_{m=2}^{\infty}\left(\frac{2}{1+q^{2m-1}}-1\right)^2 \\
& > & \left(\frac{2}{1+q'}-1\right)^2 \prod_{m=2}^{\infty}\left(\frac{2}{1+q'^{2m-1}}-1\right)^2 = \frac{\vartheta_4}{\vartheta_3}(y').
\end{eqnarray*}

Hence, $\dfrac{\vartheta_4}{\vartheta_3}(y)$ is a strictly increasing function of $y$.

As for the limits as $y$ tends to $0$ or $\infty$, note that $y\rightarrow 0$ precisely when $q\rightarrow 1$, and $y\rightarrow\infty$ precisely when $q\rightarrow 0$.
Now
\[
\frac{\vartheta_4}{\vartheta_3}(y)=\left(\frac{2}{1+q}-1\right)^2\prod_{m=2}^{\infty}\left(\frac{2}{1+q^{2m-1}}-1\right)^2\leq \left(\frac{2}{1+q}-1\right)^2,
\] and of course $\left(\dfrac{2}{1+q}-1\right)^2\rightarrow 0$
as $q\rightarrow 1$, i.e.,  when $y\rightarrow 0$.

Let us now consider the case $y\rightarrow\infty$, i.e.  $q\rightarrow 0$. Since $\vartheta_4(q)$ is absolutely convergent for $0<q<1$, we can group the terms in the following way: 
 \[
 \vartheta_4(q)=1-2q + 2(q^4-q^9) + 2(q^{16}-q^{25}) + \cdots > 1-2q,
 \] since $q^n > q^m$ when $n < m$.
 
On the other hand $q^4 - q^9 < q^4 < q$; $q^{16}-q^{25} < q^{16} < q^2$, etc., so
\[
\vartheta_4(q) < 1 -2q + 2(q + q^2 +\cdots)=1-2q + \frac{2q}{1-q}.
\]

Hence, $1-2q < \vartheta_4(q) < 1-2q + \frac{2q}{1-q}$. Now, as $q \rightarrow 0$, the two terms on either side of $\vartheta_4(q)$ tend to $1$, so $\vartheta_4(q)$ also tends to $1$.

We argue similarly for $\vartheta_3(q)$: $1 <\vartheta_3(q) < 1 +2(q + q^2  + q^3+ \cdots)$, where we have used $q^4 < q^2$, $q^9 < q^3$, $q^{16} < q^4$, etc.  Thus, we find $1 < \vartheta_3(q) < 1 + \frac{2q}{1-q}$.  Taking limits as $q\rightarrow 0$, we find $\vartheta_3(q)$ tends to $1$.

Thus, as $q\rightarrow 0$, $\vartheta_4(q)$ and $\vartheta_3(q)$ each tend to $1$, so their quotient tends to $1$.\end{proof}

\section {Proof of Lemma \ref{lem:val_alpha_at_peak}} \label{app:alpha_etc.}

\begin{proof} 
By \cite[Theorem 2.3]{B2}, for $k = \dfrac{\Th_2^2(q)}{\Th_3^2(q)}$,
\begin{equation} \label{B2Th2.3}
\pi \frac{K'(k)}{K(k)} = -\log q.
\end{equation}
Here, 
$$
K(k) = \int_0^1 \frac{dt}{(1-t^2)(1-k^2t^2)}
$$ and
$K'(k) = K(k')$, where $k' = \sqrt{1-k^2}$. By \cite[Exercise 4, \S 1.6]{B2},
\begin{equation} \label{KK'_ratio}
\dfrac{K'}{K}(\sqrt{2}-1) ={\sqrt{2}}. \end{equation}
In fact, we can see this as follows: Denoting $\sqrt{2}-1$ temporarily by $\alpha$, we have $\alpha' = \sqrt{1-\alpha^2} = \sqrt{2\alpha}$. By \cite[Theorem 1.2 (a), \S 1.4]{B2}, 
\begin{equation*}
K(\alpha) = \frac{1}{1+\alpha} K\left(\frac{2\sqrt{\alpha}}{1+\alpha}\right).
\end{equation*}
For our choice of $\alpha$, $\dfrac{2\sqrt{\alpha}}{1+\alpha}$ is just $\alpha'$, so the relation above becomes 
$K(\alpha) = \dfrac{1}{1+\alpha} K'(\alpha),$ which yields Equation \ref{KK'_ratio} above.

Now, since $k'=\dfrac{\Th_4^2(q)}{\Th_3^2(q)}$, we have in our situation $k'  = \sqrt{2}-1$, hence, for $k= \sqrt{1-k'^2}$, the expression $\dfrac{K'(k)}{K(k)}$ on the left side of Equation \ref{B2Th2.3}  equals $\dfrac{K(k')}{K'(k')} = \dfrac{K}{K'}(\sqrt{2}-1) = {1}/{\sqrt{2}}$. Hence, from Equation \ref{B2Th2.3} and the fact that $q = e^{-\pi y}$, we find $\pi  ({1}/{\sqrt{2}} )= \pi y$, so $y = {1}/{\sqrt{2}}$. Moroever, by Lemma \ref{kprime_increasing} above, this is the unique value of $y$ for which $\dfrac{\vartheta_4^2}{\vartheta_3^2}(y)$ attains this value.
\end{proof}
We note that theta functions have been computed for special values of $y$, some of which can be found in  \cite{Dieck}  for instance.

\end{document}